\newtheorem{thm}{Theorem}
\newtheorem{cor}{Corollary}
\newtheorem{lem}{Lemma}
\theoremstyle{definition}
\newtheorem{defn}{Definition}
\newcommand{\ket}[1]{{\left\vert{#1}\right\rangle}}
\definecolor{xdxdff}{rgb}{0.49019607843137253,0.49019607843137253,1.}
\definecolor{uuuuuu}{rgb}{0.26666666666666666,0.26666666666666666,0.26666666666666666}
\definecolor{qqqqff}{rgb}{0.,0.,1.}
\definecolor{ffqqqq}{rgb}{1.,0.,0.}
\begin{document}
\title{On the computational complexity of detecting possibilistic locality}
\author{Andrew W. Simmons}\affiliation{Department of Physics, Imperial College London, SW7 2AZ.}
\begin{abstract}
The proofs of quantum nonlocality due to GHZ and Hardy are quantitatively different from that of Bell insofar as they rely only on a consideration of whether events are possible or impossible, rather than relying on specific experimental probabilities. Here, we consider the computational task of determining whether or not a given table of possibilities constitutes a departure from possibilistic local realism. By considering the case in which one party has access to measurements with two outcomes and the other three, it is possible to see at exactly which point this task becomes computationally difficult.
\end{abstract}
\maketitle

\section{Introduction}
Since the inception of quantum mechanics, the apparent nonlocality of the theory has been a thorny issue. Einstein, together with Podolsky and Rosen \cite{EPR}, argued that this property, in apparent contravention of the information-propagation bound of the speed of light from special relativity, was evidence of the incompleteness of the quantum-mechanical description of nature. The related concept of nonlocal steering was also the subject of Einstein's scrutiny in a letter to Edwin Schr\"{o}dinger \cite{Eins1935}, in which it is noted that a local description of steering necessitates nature to have what we would now call the $\psi$-epistemic property: that the quantum state is not instantiated as a part of nature's ontology.

In this paper, we will consider ``Bell experiments'' of the classic form: we have two separate quantum systems, in a possibly entangled state, and a set of observables that we can measure on each. Formally, this is a prepare-and-measure scenario for which the commutativity graph of the available observables is bipartite; these notions are equivalent since we will consider systems of finite quantum dimension. Such an experiment will have probabilities for its outcomes predicted by quantum theory; we are then interested in whether or not these predictions could be realised via a local (that is, factorisable) hidden variable model. That this is not always possible was first noted by Bell \cite{Bell1966}.

In 1991, Pitowsky \cite{Pito1991} demonstrated that determining whether or not a given table of these operational predictions had a local hidden variable explanation was \textbf{NP}-Hard. In this paper, we consider a more extreme, possibilistic, manifestation of nonlocality; one in which it is not only true that we cannot account for the specific experimental probabilities with a local hidden variable model, but that we cannot even account for the possibilities. In doing so, we will identify the minimal operational requirements for this problem to remain \textbf{NP}-Hard, filling in the final gap in a categorisation started by Mansfield and Fritz \cite{Mans2011,Mans2016} and continued by Abramsky, Gottlob and Kolaitis \cite{Abra2013}.

\section{The Locality Decision Problem}

Given a table of probabilities for a Bell experiment, a question of experimental and foundational relevance is whether or not that table provides an example of quantum nonlocality; that is to say that it cannot be reproduced by a local hidden variable model. A table of probabilities can be transformed into a table of possibilities simply by denoting each nonzero element with a 1 to mark it as possible. On the top of figure \ref{CHSH}, we see the standard table of probabilities given; below is the corresponding table of possibilities. We will follow the notation of Mansfield and Fritz for these tables, in which each row and column denotes a measurement outcome, and the intersection of a row and column is the probability of that outcome within the contexts, which are separated by lines. For example, if Alice chooses measurement setting $A_2$, and Bob chooses measurement setting $B_1$, then we see from the table that the probability of seeing the outcome $(a_2=0, b_2=1)$ is $\nicefrac18$.

\begin{figure}
\begin{equation*}
\begin{array}{c| c c | c c |}
&b_1=0&b_1=1&b_2=0&b_2=1\\ \hline
a_1=0&\nicefrac{1}{2} &0  & \nicefrac{3}{8} & \nicefrac{1}{8}\\
a_1=1&0&\nicefrac{1}{2}  &\nicefrac{1}{8} &\nicefrac{3}{8} \\ \hline
a_2=0&\nicefrac{3}{8}&\nicefrac{1}{8} &\nicefrac{1}{8} &\nicefrac{3}{8}  \\
a_2=1&\nicefrac{1}{8}&\nicefrac{3}{8} &\nicefrac{3}{8} & \nicefrac{1}{8} \\ \hline
\end{array} \end{equation*}\begin{equation*}
\begin{array}{c| c c | c c |}
&b_1&b_1'&b_2&b'_2\\ \hline
a_1&1 &0  &1 & 1\\
a_1'&0&1  &1 &1\\ \hline
a_2&1&1&1 &1 \\
a_2'&1&1 &1 &1 \\ \hline
\end{array} 
\end{equation*}
\caption{Probability and Possibility tables for a CHSH experiment. Alice has two available measurements: measurement $A_1$ with outcome $a_1$ and $a_1'$; and measurement $A_2$ with outcomes $a_2$ and $a'_2$. Bob has measurements denoted similarly. Typically, however, we will not label outcomes as the properties we are consider are invariant under relabelling.}
\label{CHSH}
\end{figure}

\begin{defn}[$(j,k)$-\textsc{PossLoc}]
The $k$-outcome Possibilistic locality decision problem, $(j,k)$-\textsc{PossLoc} is as follows: the problem instance is an element of $\{0,1\}^{j+k+n+m}$, where $j$ and $k$ are the maximum number of measurement outcomes for each measurement on the two subsystems being measured, and $n$ and $m$ are the number of measurements available at each location. This should be thought of as a data table where a 1 indicates a possible outcome and a 0 indicates an impossible outcome. In general, we will take $n=m$ to maintain a single scaling factor; in fact this change is without loss of generality. In the case where $j=k$, we will denote this problem merely $k$-\textsc{PossLoc}. While this is not essential, we will also require that the data tables obey a possibilistic no-signalling principle; this merely streamlines the wording of the theorems. 
\end{defn}
\begin{thm} \cite{Mans2011}
2-\textsc{PossLoc} is in \textbf{P}.
\end{thm}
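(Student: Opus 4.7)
The plan is to show that, in the two-outcome setting, possibilistic locality can be decided by solving a polynomial number of \textbf{2-SAT} instances, each of size linear in the input. The starting point is the observation that any local hidden variable model with two-outcome measurements on each side is a convex combination of deterministic local strategies, each specified by an assignment of a bit $\alpha_i \in \{0,1\}$ to Alice's measurement $A_i$ and a bit $\beta_j \in \{0,1\}$ to Bob's measurement $B_j$. Given a possibility table with context-wise allowed sets $S_{ij} \subseteq \{0,1\}^{2}$, I would call such a strategy $(\alpha,\beta)$ \emph{globally compatible} exactly when $(\alpha_i,\beta_j) \in S_{ij}$ for every $(i,j)$.

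Next I would express global compatibility as an explicit 2-SAT formula $\Phi$. Each constraint $(\alpha_i,\beta_j) \in S_{ij}$ forbids some subset of the four patterns in $\{0,1\}^{2}$, and each forbidden pattern translates directly into a two-literal clause over the boolean variables $\alpha_i,\beta_j$; for instance, ruling out $(0,0)$ contributes the clause $\alpha_i \vee \beta_j$, ruling out $(1,0)$ contributes $\bar\alpha_i \vee \beta_j$, and so on. The solution set of $\Phi$ is therefore exactly the set of globally compatible deterministic strategies, and $\Phi$ has size $O(nm)$.

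The key structural claim is that the table is possibilistically local if and only if, for every context $(i,j)$ and every pattern $(a,b) \in S_{ij}$, at least one globally compatible strategy attains $(\alpha_i,\beta_j) = (a,b)$. The ``only if'' direction is immediate from the convex-combination form of a local model. For the ``if'' direction, one simply takes the uniform distribution over the collection of globally compatible strategies certified by these witnesses; by construction its context-wise support is contained in $S_{ij}$ (compatibility) and contains every element of $S_{ij}$ (witnessing), so the two are equal. Checking the condition amounts to running, for each of the $O(nm)$ marked possibilities, one linear-time 2-SAT satisfiability test on $\Phi$ augmented with the two unit clauses $(\alpha_i = a)$ and $(\beta_j = b)$, giving an overall polynomial algorithm.

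The main subtleties, rather than any deep obstacle, will be the routine case analysis verifying that every subset of $\{0,1\}^{2}$ is indeed 2-SAT-expressible (sixteen cases, trivially), and stating the algorithm's behaviour in the degenerate situation where $\Phi$ itself is unsatisfiable despite some $S_{ij}$ being non-empty; in that case the table is simply possibilistically nonlocal and every augmented query fails, so the algorithm reports the correct answer. With possibilistic no-signalling assumed as in the definition, the marginals are well-defined and these edge cases cause no trouble, yielding the claimed polynomial-time decision procedure.
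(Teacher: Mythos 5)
Your proposal is correct, but it takes a genuinely different route from the paper. The paper (following Mansfield and Fritz) argues structurally: it asserts that in the two-outcome, two-party case the \emph{only} possible form of possibilistic nonlocality is a (coarse-graining of a) Hardy paradox, so the algorithm just scans all $O(n^4)$ quadruples of contexts for that fixed pattern. You instead encode the set of deterministic local strategies compatible with the table as a 2-SAT formula $\Phi$ (one two-literal clause per forbidden outcome pattern per context), correctly characterise possibilistic locality as ``every 1-entry extends to a globally compatible deterministic grid'', and decide each such extension query by a linear-time 2-SAT test with two unit clauses appended; both directions of your structural claim (support containment from the convex-combination form, and the uniform mixture over witnesses for sufficiency) are sound, and the degenerate cases you flag are handled properly. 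The trade-off: your argument is self-contained and modular --- it does not rely on the nontrivial classification result that Hardy-type patterns exhaust (2,2) possibilistic nonlocality, which is the real content of the cited proof --- and it sits naturally in the robust-satisfiability picture of Abramsky, Gottlob and Kolaitis, making transparent why tractability is lost once one party has three outcomes (the analogous encoding then escapes 2-SAT). The paper's route, by contrast, buys the structural insight that Hardy is universal in the (2,2) setting, which the paper uses later as the foil for its generalised Hardy paradox in the (2,3) case, along with an explicit $O(n^4)$ pattern-search bound.
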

\begin{proof}
This proof proceeds by showing that the only kind of possibilistic nonlocality that can occur in the two-outcome case is the one originally identified by Hardy \cite{Hard1993}; that is to say, a feature of the following kind, in which a blank space can be substituted either for a 0 or for a 1:
\begin{equation*}
\begin{array}{c| c c | c c |}
&&&&\\ \hline
&1 & & 0 & \\
&& & & \\ \hline
&0& & &  \\
&& & & 0 \\ \hline
\end{array}
\end{equation*}
We can quickly see that the highlighted one is impossible to extend to a ``deterministic grid'', \emph{i.e.} no single deterministic hidden variable can account for that possibility without also introducing an event that is impossible in the diagram. Since the appearance of this structure is equivalent to possibilistic nonlocality in this case, an algorithm to determine whether or not a given data table is possibilistically local or nonlocal reduces to checking each possible set of four contexts of this type, of which there are $n^2(n-1)^2/4$. Since checking for the appearance of such structures is possible in constant time, our algorithm runs in $O(n^4)$.
\end{proof}
However, this argument is specific to the case in which we have measurements with at most two outcomes on each side. As an illustration that there are quantumly-accessible nonlocal data tables that are not reducible to a fine-graining of a Hardy paradox, we will now demonstrate a novel possibilistic nonlocality scenario in the (2,3) case. This will be a generalisation of Hardy's proof of Bell's theorem.

\begin{figure}
\begin{center}
\begin{tikzpicture}[line cap=round,line join=round,>=triangle 45,x=0.8cm,y=0.8cm]
\clip(-5.,-4.) rectangle (5.,5.);
\draw(0.,0.) circle (3.2cm);
\draw (-2.5074154767005594,3.116547388890871)-- (2.5262219891238353,3.101322695507066);
\draw (-0.6234859067055393,3.9511093789136704)-- (3.70768134486122,-1.501032659527465);
\draw (0.6202539684631545,3.951618024886225)-- (-3.6765801200722317,-1.575677194316671);
\draw [dash pattern=on 2pt off 2pt] (-2.5074154767005594,3.116547388890871)-- (-3.6765801200722317,-1.575677194316671);
\draw [dash pattern=on 2pt off 2pt] (-3.6765801200722317,-1.575677194316671)-- (3.70768134486122,-1.501032659527465);
\draw [dash pattern=on 2pt off 2pt] (-3.6765801200722317,-1.575677194316671)-- (2.5262219891238353,3.101322695507066);
\draw [dash pattern=on 2pt off 2pt] (-2.5074154767005594,3.116547388890871)-- (3.70768134486122,-1.501032659527465);
\draw [dash pattern=on 2pt off 2pt] (3.70768134486122,-1.501032659527465)-- (2.5262219891238353,3.101322695507066);
\begin{scriptsize}
\draw [fill=uuuuuu] (0.,0.) circle (1.5pt);
\draw [fill=xdxdff] (-2.5074154767005594,3.116547388890871) circle (1.5pt);
\draw[color=xdxdff] (-2.8168692003229885,3.7112493895076946) node {$x$};
\draw [fill=xdxdff] (2.5262219891238353,3.101322695507066) circle (1.5pt);
\draw[color=xdxdff] (2.7461074833053605,3.5513937376792932) node {$y$};
\draw [fill=xdxdff] (-0.6234859067055393,3.9511093789136704) circle (1.5pt);
\draw[color=xdxdff] (-0.802687987285138,4.47855651828402) node {$a$};
\draw [fill=xdxdff] (3.70768134486122,-1.501032659527465) circle (1.5pt);
\draw[color=xdxdff] (4.024952697932568,-1.2762469475384213) node {$A$};
\draw [fill=xdxdff] (0.6202539684631545,3.951618024886225) circle (1.5pt);
\draw[color=xdxdff] (0.8598107917302307,4.4146142575526595) node {$b$};
\draw [fill=xdxdff] (-3.6765801200722317,-1.575677194316671) circle (1.5pt);
\draw[color=xdxdff] (-4.127685545315876,-1.244275817172741) node {$B$};
\end{scriptsize}
\end{tikzpicture}
\end{center}
\caption{Points on the Bloch sphere to which Alice's ensemble can be steered, revealing the generalised Hardy paradox.}
\label{steer}
\end{figure}
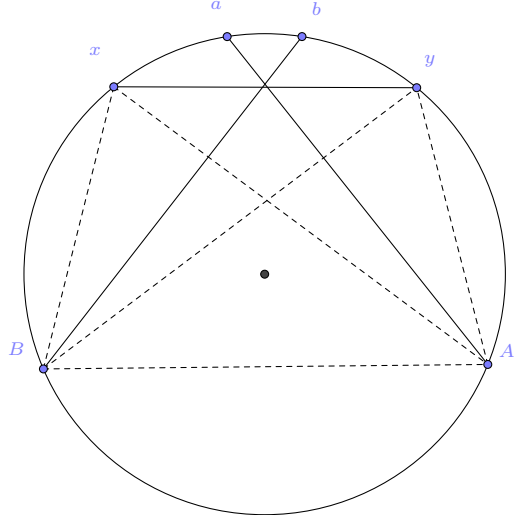

Constructing the possibility table for this steering scenario reveals the following structure in a subset of the table:
\begin{equation*}
\begin{array}{ c  | c c | c c c| c c c|}\
  & b&b^\perp& A^\perp &B^\perp&x^\perp& A^\perp &B^\perp&y^\perp \\ \hline
  a &1&1&1&1&1&1&1&1\\ 
  A &1&1&0&1&1&0&1&1\\ \hline
  b &1&0&1&1&1&1&1&1\\ 
 B  &1&1&1&0&1&1&0&1\\ \hline
 x &1&1&1&1&0&1&1&1\\ 
 y &1&1&1&1&1&1&1&0\\ \hline
\end{array}
\end{equation*}
We now present this paradox minimally, demonstrating that it is neither a Hardy paradox, nor a fine-graining of a Hardy paradox. Therefore, the Hardy paradox is not universal even for quantumly-accessible possibilistic nonlocality as long as one party has access to three-outcome measurements. 

\begin{equation*}
\begin{array}{ c | c c | c c c| c c c|}
&&&&&&&& \\ \hline
&1&&0&&&0&&\\
   &&&&&&&&\\ \hline
   &0&&&&&&&\\ 
  &&&&0&&&0&\\ \hline
  &&&&&0&&&\\ 
  &&&&&&&&0\\ \hline
\end{array}
\end{equation*}
We see that in some sense this can be thought of as a generalisation of Hardy's paradox; the third measurement row acts to convert the two right-hand measurement columns into a single effective column identical to that of the standard Hardy paradox; we see that at least one of the two right-hand columns must have its completion of the highlighted 1 in a subcolumn other than the rightmost subcolumn, causing a Hardy paradox. We note also that this family of no-signalling distributions has a supremum value of the \emph{paradoxical probability} (the maximum probability of a measurement witnessing a Hardy paradox) of $\nicefrac{1}{2}$. This is a large violation compared to other known situations with a fixed number of outcomes, a fixed quantum dimension, and a fixed number of measurements: the paradoxical probability for Hardy's own construction is roughly 0.09. This probability can be met by scenarios with two parties with access to $k$ different 2-outcome measurements, or by a generalised no-signalling theory \cite{Mans2016}. This is because the overlap between $\ket{A}$ and $\ket{b^\perp}$ can be made arbitrarily large, and the reduced state of the system can be made arbitrarily close to the completely mixed state.

\section{\textsc{(2,3)-PossLoc} is \textbf{NP}-complete}

It is easy to see that $(j,k)$-\textsc{PossLoc} is in \textbf{NP}; to show that a data table is possibilistically local, we can specify a possibilistically local hidden variable model that accounts for each of the occurrences of the possible events (1 entries) within the problem instance. It is clear to see that such a witness is only polynomially-sized, and can be checked in polynomial time. The rest of this section, then, will be a proof of \textbf{NP}-hardness. We note that since Mansfield and Fritz \cite{Mans2011} showed that \textsc{2-PossLoc} was in \textbf{P} and Abramsky \emph{et al} showed that \text{3-PossLoc}  and the three-party generalisation of \textsc{2-PossLoc} were \textbf{NP}-complete, this leaves $(j,k)$-\textsc{PossLoc} as the only remaining case to have its complexity analysed.

\begin{defn}[$r$-\textsc{Robust} decision problems]	
Following Abramsky \emph{et al}, we shall define, for a decision problem \textsc{P}, the problem $r$-\textsc{Robust P} to be the decision problem that, given an instance of \textsc{P}, asks whether or not every assignment of $r$ variables in the problem can be extended to a satisfying assignment of the problem.
\end{defn}

\begin{thm}
The \textsc{(2,3)-PossLoc} decision problem is \textbf{NP}-complete.
\end{thm}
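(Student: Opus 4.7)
The plan is to prove \textbf{NP}-hardness by polynomial-time reduction from a known \textbf{NP}-complete satisfiability-style problem. Following the strategy of Abramsky \emph{et al.} for their \textsc{3-PossLoc} lower bound, the natural source is a robust variant of \textsc{SAT} (for instance $1$-\textsc{Robust} Positive $1$-in-$3$ \textsc{SAT}, which sits neatly inside the robustness framework just defined). Given a formula $\phi$ on $N$ variables and $M$ clauses, I would construct a possibility table in which Alice has a small number of two-outcome measurements per Boolean variable of $\phi$ — so that her outcome encodes a truth value — and Bob's three-outcome measurements are organised into $M$ blocks, each block realising one clause of $\phi$ as a local gadget.

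First, I would promote the novel generalised Hardy paradox displayed just above the theorem to a reusable clause gadget. Its crucial property is that the minimal zero-pattern exhibited cannot be extended to a deterministic grid, so embedding that pattern in a sub-block of the table forces any possibilistically local hidden variable model to avoid a specific combination of Alice's and Bob's outcomes. By placing the highlighted zeros appropriately inside each clause block, I would arrange for the ``forbidden combination'' at block $C$ to correspond exactly to the Alice-side truth assignment that falsifies clause $C$. Entries outside the designed zeros are filled with $1$s; this preserves possibilistic no-signalling and ensures that any consistent global assignment can be extended locally.

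Next, I would verify the composition: distinct clause gadgets share Alice's variable rows (since variables recur across clauses) but occupy disjoint blocks of Bob's measurements. A deterministic hidden variable $\lambda$ must therefore pick a single outcome for each of Alice's measurements — amounting to a global truth assignment — while Bob's outcomes can be chosen independently within each block. The zero pattern in block $C$ then precludes precisely the Alice assignments that falsify $C$. Consequently, $\phi$ is satisfiable if and only if there is a deterministic assignment avoiding every gadget's zeros, and by convex combination if and only if the assembled table is possibilistically local.

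The main obstacle is the \emph{hard direction} of correctness: showing that a possibilistically local completion of the global table cannot exploit cross-gadget correlations to covertly falsify a clause while still appearing local. The point here is that the generalised $(2,3)$-paradox exhibited above is minimal — neither a Hardy paradox nor a fine-graining of one — so the obstruction it carves out cannot be resolved by the $O(n^4)$ local-check that settles \textsc{2-PossLoc}; this rigidity is precisely what lets the clause gadgets survive being glued together along shared Alice rows. Once this interlocking is verified, the constructed table has size polynomial in $|\phi|$ with two outcomes on Alice's side and three on Bob's, giving a polynomial-time reduction. Combined with the membership in \textbf{NP} already noted at the start of the section, this yields \textbf{NP}-completeness of \textsc{(2,3)-PossLoc}.
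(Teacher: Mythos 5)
There is a genuine gap, and it sits exactly at the step you flag as the ``hard direction.'' Your correctness criterion is stated as ``$\phi$ is satisfiable if and only if the assembled table is possibilistically local,'' but possibilistic locality is not the existence of \emph{one} deterministic grid avoiding the zeros: it is the requirement that \emph{every} possible event (every 1 in the table) extends to a deterministic grid supported on the 1s. Since your construction fills everything outside the clause gadgets with 1s, locality of the resulting table demands that every choice of a single 1 --- which, through the gadget structure, pins down the value of one of Alice's variables and can force the value of a second variable sharing the same subcolumn --- extends to a satisfying assignment. That is a \emph{robustness} condition on $\phi$, not satisfiability: a formula that is satisfiable only with, say, $x=1$ would still produce a nonlocal table because the 1s in the $x=0$ subrow cannot be completed. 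So the forward direction of your ``iff'' fails, and the reduction as described does not decide the problem you started from. The paper's proof is organised precisely around this point: it reduces from \textsc{2-Robust 3-SAT} (every assignment of two variables extends to a satisfying one), and the correctness argument matches ``choose a 1 to extend'' with ``fix up to two variables,'' rather than with global satisfiability.

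The second gap is that you take the \textbf{NP}-hardness of your robust source problem ($1$-\textsc{Robust} Positive 1-in-3 \textsc{SAT}) on faith because it ``sits neatly inside the robustness framework.'' Robustness does not inherit hardness from the underlying problem --- the paper explicitly cautions that one cannot infer the difficulty of $r$-robust variants from the base problem, and for good reason: robust instances are heavily constrained (e.g.\ they must already be satisfiable under many partial assignments), which can collapse the complexity. This is why the paper spends a separate lemma proving \textsc{2-Robust 0-valid 1-valid 3-SAT} (hence \textsc{2-Robust 3-SAT}) \textbf{NP}-complete, via the trick of padding clauses with fresh literals $\neg x$ and $y$ so that the instance becomes 0-valid and 1-valid while fixing $x=1,y=0$ recovers the original formula. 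Note also that the robustness parameter must match the gadget: in a $(2,3)$ block a single chosen 1 can force two variable assignments, so $2$-robustness, not $1$-robustness, is the relevant notion. Finally, the clause gadget itself need not be the generalised Hardy paradox; the paper's gadget is the simpler $2\times 3$ block with a single 0 placed in the subrow of the falsifying value and the subcolumn indexing the variable's position in the clause, and its role is to encode the clause constraint, with the locality--robustness equivalence doing the real work.
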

\begin{proof}
\begin{defn}
\textsc{0(1)-valid 3-SAT} is the set of \textsc{3-SAT} decision problems that are satisfied by assigning a value of 0 (1) to all variables.
\end{defn}

\begin{lem}\label{rob3sat}
\textsc{2-Robust 0-valid 1-valid 3-SAT} is \textbf{NP}-complete.
\end{lem}
\begin{proof}[Proof of Lemma]
Consider a general \textsc{3-SAT} instance $C=\bigwedge_i c_i$, where $c_i$ are clauses consisting of the disjunction of three literals or negated literals. We will introduce two new variables $x$ and $y$, and apply the mapping $f$ to each clause $c_i$ defined by:
\begin{equation}
f(l_1\cup l_2 \vee  \circ  l_3) = (l_1\vee l_2 \vee \circ l_3 \vee \neg x),
\end{equation}
\begin{equation}
f(\neg l_1\vee\neg l_2 \vee \circ  l_3) = (\neg l_1\cup\neg l_2 \vee \circ  l_3 \vee  y),
\end{equation}
in which the $\circ$ symbol is being used to denote the presence or absence of a $\neg$ symbol. We note that now, every clause of $C'=\bigwedge_i f(c_i)$ contains at least one positive literal and at least one negative literal, and so $C'$ is both 0-valid and 1-valid.

We note that it is possible now to convert this 4-SAT instance back to a \textsc{3-SAT} instance by the following procedure, mapping each clause to a pair of equisatisfiable clauses (which we will refer to as an effective clause):
\begin{equation}
(\circ l_1\cup\circ l_2 \vee  \circ  l_3 \vee x) \rightarrow (\circ l_1\cup\circ l_2 \vee z)\wedge(\neg z \vee \circ  l_3 \vee x)
\end{equation}

 We note also that this adjustment maintains 0-validity and 1-validity, since we can choose an order on the $\{\circ l_i\}$ such that each of the two clauses in the effective clause contain a positive and a negative literal. We note that since each effective clause contains either $\neg x$ or $y$, setting $x$ to 0 and $y$ to 1 leads to a satisfying assignment for any assignment choices of the other variables. Now, to test whether or not the instance is $2$-robust, we check that there are no choices for our two variables to fix that cause there to be no satisfying assignment. We will consider the possibilities on a case-by-case basis:
\begin{enumerate}
\item If neither $x$ is fixed to 1 or $y$ is fixed to 0, then we set $x$ to 0 and $y$ to 1. As mentioned above, this yields a satisfying assignment. If one or more $z$ has been assigned a value, we may need to assign some $l_i$ to make the affected clauses valid. Since this affects at most two clauses it can be seen that it is always possible to satisfy such a modified pair of clauses.

\item If exactly one of $x$ is fixed to 1 or $y$ is fixed to 0, then, removing all instances of the $\neg x$ or $y$ literals respectively still leaves the instance either 0-valid or 1-valid, and we can utilise this assignment.

\item If we fix both $x$ to 1 and $y$ to 0, then removing the $\neg x$ and $y$ literals from $C'$ transform it back into $C$, and therefore this restriction has a satisfying assignment if and only if $C$ did originally.
\end{enumerate}
Therefore, since \textsc{3-SAT} is \textbf{NP}-hard, so is \textsc{2-Robust 0-valid 1-valid 3-SAT}. It is in \textbf{NP} since there are only $O(n^2)$ different pairs of variables to set, and so the witness size to demonstrate a set of assignments that display robustness is sitll only polynomially sized.  Hence,  \textsc{2-Robust 0-valid 1-valid 3-SAT} is \textbf{NP}-complete.

\end{proof}
\begin{cor}
\textsc{2-Robust 3-SAT} is \textbf{NP}-complete.
\end{cor}
We note now that we will in fact only need the fact that \textsc{2-Robust 3-SAT} is \textbf{NP}-complete for the rest of this proof; the motivation for having proven that the \textsc{0-valid 1-valid} is also \textbf{NP}-complete will become apparent when considering which of these possibility tables are realisable within quantum mechanics.

\begin{lem}\label{nplem}
There is a polynomial-time embedding of \textsc{2-Robust 3-SAT} into \textsc{(2,3)-PossLoc}.
\end{lem}
\begin{proof}[Proof of Lemma]
The reduction algorithm is as follows:
\begin{itemize}
\item For each variable in the \textsc{2-Robust 3-SAT} instance, we add a measurement to the party with two-outcome measurements available to them, to which we will assign the measurement rows. We pick any ordering of the variables to do this.

\item For each clause in the \textsc{2-Robust 3-SAT} instance, we add a measurement to the party with three-outcome measurements available to them; these are our measurement columns. This is a departure from the strategy of Abramsky \emph{et al}, whose constructions have a direct symmetry between the rows and columns.

\item For each intersection of a variable row and clause column, such that the variable is represented positively in the clause, we have the measurement possibilities given by:
\begin{equation}
\begin{array}{  | c c  c|}
\hline
 0 & 1 & 1 \\
 1 & 1& 1 \\ \hline
 \end{array}\, ,
 \begin{array}{  | c c  c|}
\hline
 1 &0 & 1 \\
 1 & 1& 1 \\ \hline
 \end{array} \, ,
 \mbox{or} \,
 \begin{array}{  | c c  c|}
\hline
 1 & 1 & 0\\
 1 & 1& 1 \\ \hline
 \end{array}\, ,
 \end{equation}
depending on whether the variable is the first, second, or third variable in the clause with respect to our variable ordering.

\item For each intersection of a variable row and clause column, such that the variable is represented negatively in the clause, we have the measurement possibilities given by:
\begin{equation}
\begin{array}{  | c c  c|}
\hline
 1 & 1 & 1 \\
 0 & 1& 1 \\ \hline
 \end{array}\, ,
 \begin{array}{  | c c  c|}
\hline
 1 &1 & 1 \\
 1 & 0& 1 \\ \hline
 \end{array} \, ,
 \mbox{or} \,
 \begin{array}{  | c c  c|}
\hline
 1 & 1 & 1\\
 1 & 1& 0 \\ \hline
 \end{array}\, ,
 \end{equation}
depending on whether the variable is the first, second, or third variable in the clause with respect to our variable ordering.
\end{itemize}
We can see that choices of subrow in a measurement row is equivalent to an assignment of that variable, and that the clause structure effectively bans the assignment disallowed by that specific clause, for example the clause $(x_1 \vee x_2 \vee x_3)$ corresponds to the possibility table
\begin{equation*}
\begin{array}{c  | c c  c|}
\hline
 x_1=0&0 & 1 & 1 \\
 x_1=1&1 & 1& 1 \\ \hline
 x_2=0&1 &0 & 1 \\
x_2=1& 1 & 1& 1 \\ \hline
x_3=0& 1 & 1 & 0\\
x_3=1& 1 & 1& 1 \\ \hline
 \end{array}
 \end{equation*}
We see we cannot simultaneously assign a 1 to each of the top subrows as part of a deterministic grid. We, then, need only to answer the question of the necessary and sufficient nature of \textsc{2-Robust}ness; we note that choosing a specific 1 to complete to a grid represents therefore an assignment of one variable, and the structure of the table then might result in the forced selection of another variable's assignment, if there is a 0 in the same subcolumn as the selected 1. Therefore by the same logic as before, the \textsc{(2,3)-PossLoc} decision problem is \textbf{NP}-complete.
\end{proof}
Therefore, since \textsc{2-Robust 3-SAT} is \textbf{NP}-Hard; any instance of \textsc{2-Robust 3-SAT} can be reduced to an instance of \textsc{(2,3)-PossLoc} in polynomial time; and \textsc{(2,3)-PossLoc} is in \textbf{NP}, \textsc{(2,3)-PossLoc} is \textbf{NP}-Complete.
\end{proof}

\section{Quantum realisation}
We should note at this point that all the above proofs are for generalised no-signalling distributions. When we restrict ourselves to quantumly accessible distributions, all these problems become open.

\begin{thm}
All the quantumly accessible instances of the above \textsc{(2,3)-PossLoc} construction in which the party with three-outcome measurements has access only to a 2-dimensional quantum system are 0-valid and 1-valid under a variable renaming that can be calculated in linear time.
\end{thm}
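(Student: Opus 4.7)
The plan is to exploit the rigid Bloch-sphere structure of qubit POVMs to extract a geometrically defined renaming.

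First, I would observe that Bob's three-outcome POVM $\{E_j^c\}$ for each clause column $c$ must consist entirely of rank-one elements: each subcolumn of the clause block contains a forced zero, and a rank-two positive operator on a qubit has strictly positive trace against every state. Writing $E_j^c = \lambda_j^c |\phi_j^c\rangle\langle\phi_j^c| = \lambda_j^c (I + \vec{r}_j^c\cdot\vec{\sigma})/2$ with $|\vec{r}_j^c|=1$, completeness gives $\sum_j \lambda_j^c = 2$ and $\sum_j \lambda_j^c \vec{r}_j^c = 0$. The zero at position $(p,j)$ in the block for variable $v$ in clause $c$ then pins Bob's steered state $\rho_{v,p}$ to be pure with Bloch vector $\vec{t}_{v,p}=-\vec{r}_j^c$, and no-signalling places Bob's reduced-state Bloch vector $\vec{b}$ on the segment between $\vec{t}_{v,0}$ and $\vec{t}_{v,1}$.

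Next, I would define the renaming. Choose a nonzero vector $\vec{n}$ in the orthogonal complement $\vec{b}^\perp$ (any nonzero vector if $\vec{b}=0$), chosen generically so that $\vec{n}\cdot\vec{t}_{v,p}\neq 0$ for every forced steered state; this is always possible since only finitely many bad lines must be avoided in $\vec{b}^\perp$. Because $\vec{b}$ lies between $\vec{t}_{v,0}$ and $\vec{t}_{v,1}$ and satisfies $\vec{n}\cdot\vec{b}=0$, the scalars $\vec{n}\cdot\vec{t}_{v,0}$ and $\vec{n}\cdot\vec{t}_{v,1}$ have strictly opposite signs. Rename each variable $v$ (flipping its polarity in every clause) iff $\vec{n}\cdot\vec{t}_{v,0}>0$; after renaming, the steered state corresponding to a positive literal of $v$ has $\vec{n}\cdot\vec{t}<0$, and that for a negative literal has $\vec{n}\cdot\vec{t}>0$.

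Third, I would verify the clause-polarity property. Suppose some clause $c$ had all positive literals after renaming; then the three forced Bloch vectors $\vec{t}_{v_j,0}^{\mathrm{new}}=-\vec{r}_j^c$ each satisfy $\vec{n}\cdot\vec{t}<0$, yet $\sum_j \lambda_j^c \vec{t}_{v_j,0}^{\mathrm{new}}=-\sum_j \lambda_j^c \vec{r}_j^c=0$ by completeness. Taking the inner product with $\vec{n}$ gives a sum of three strictly negative weighted terms equal to zero, a contradiction. The all-negative case is symmetric. Hence every clause after renaming contains both a positive and a negative literal, which is precisely $0$-validity and $1$-validity; the renaming is linear-time because each variable's flip bit is a single sign comparison of a precomputed scalar.

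The main obstacle I anticipate is a careful treatment of two kinds of edge cases: variables that appear in only one polarity across all clauses (so only one of $\vec{t}_{v,0},\vec{t}_{v,1}$ is directly pinned by a zero), and potentially degenerate positions of the forced Bloch vectors. In the one-polarity case the other steered state is still determined by the quantum realization through Alice's measurement statistics, and by no-signalling it lies on the line through $\vec{b}$ opposite to the pinned one, preserving the opposite-sign property. Geometric degeneracies are avoided by choosing $\vec{n}$ outside a finite union of lines in $\vec{b}^\perp$, which is always possible.
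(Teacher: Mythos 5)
Your proposal follows essentially the same route as the paper's own proof: a separating central hyperplane in the Bloch ball defines the renaming, and completeness of each clause's three-element POVM (weighted Bloch vectors summing to zero) shows that no clause can be all-positive or all-negative after renaming. In most respects you execute it more carefully than the paper's sketch: the rank-one reduction from the forced zeros, the antipodal relation between pinned steered states and POVM elements, and the choice $\vec{n}\cdot\vec{b}=0$ to guarantee that the two elements belonging to one variable land on opposite sides of the hyperplane (the paper only loosely asserts that at least one of the two elements ``has support in our chosen hypersphere'').

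There is, however, one genuine gap. Your genericity claim fails precisely when some pinned steered state is (anti)parallel to $\vec{b}$, i.e.\ when Alice's measurement for that variable is the eigenbasis (Schmidt basis) of $\rho_B$; such realizations are quantumly accessible and fall under the theorem's universal quantifier. In that case $\vec{n}\cdot\vec{t}_{v,p}=0$ for \emph{every} $\vec{n}\in\vec{b}^{\perp}$, so the bad set is the whole plane $\vec{b}^{\perp}$ rather than a finite union of lines, your renaming rule is undefined for that variable, and your closing remark about ``degenerate positions'' invokes exactly the finite-union argument that does not apply here. The gap is repairable: if a pinned state of a two-polarity variable equals $\pm\vec{b}/|\vec{b}|$, then (using $|\vec{b}|<1$, which any realization of the construction forces) its partner is pinned to the exact antipode, so such a pair is separated by \emph{any} hyperplane whose normal is not orthogonal to $\vec{b}$; hence taking $\vec{n}=\vec{n}_0+\epsilon\,\vec{b}/|\vec{b}|$ with $\vec{n}_0\in\vec{b}^{\perp}$ generic and $\epsilon\neq 0$ small preserves the strict sign separation for the non-degenerate variables and fixes the degenerate ones, after which your clause argument goes through verbatim. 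As written, though, the step ``this is always possible'' is false, so the proof does not yet cover all instances claimed by the theorem.
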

\begin{proof}
Each measurement row corresponds to a two-outcome measurement. Without loss of generality, we can take these to be projective measurements since any two-outcome POVM is a convex mixture of such projectors. If our entangled state is pure, which it has to be in order for there to be any impossible events, then this means that each outcome steers Bob's system to one of two pure states that convexly mix to Bob's reduced state.

We note that each measurement column corresponds to a three-element POVM, and as such the hyperplane that is the convex hull of each triple contains the origin. Additionally, these POVM elements have an additional geometrical constraint: the POVM element with a 0 entry when $x_i=0$, say, must be orthogonal to the state to which Bob is steered when he gets the outcome $x_i=0$. This restricts each of these elements to be proportional to the projector onto the unique element orthogonal to the steered states.

We note that each variable is associated with two POVM elements, one representing positive occurrences of the variable, the other representing the negative occurrences of the variable; since the union of whose supports is the whole Hilbert space, at least one of these must have support in our chosen hypersphere and we choose one such POVM element to represent the positive occurrences. Now, if we chose our hemi-hypersphere such that no POVM elements lie on its equator, which is always possible since there are only finitely many POVM elements, then we see that each triple must have at least one POVM element from that half of the Hilbert space. Each clause therefore contains at least one positively-represented variable and so the all-true assignment of the variables is a satisfying assignment. Additionally, the opposite holds and so the all-false assignment of the variables is satisfied also after this transformation has taken place.
\end{proof}
We note that since our reduction in lemma \ref{nplem} was from a robust version of this problem, this proof alone does not demonstrate that the quantum realisation is not \textbf{NP}-complete; however it does demonstrate that we have access only to a very restricted set of problems we can embed, and therefore the hardness results of the previous section do not automatically carry through into the quantumly-accessible world. We shall see however, that when robustness is introduced into the mix, even simple problems can become \textbf{NP}-hard.

We see that we cannot make any assumptions about the difficulty of 2-robust variants of decision problems based on the difficulty of the underlying problem. We have also seen that we can embed some \textsc{2-Robust 0-valid 1-valid 3-SAT} problems into the quantum formalism. We note however that an arbitrary instance cannot be embedded: one can show that the array shown here is ruled out by the second tier of the NPA-hierarchy \cite{Nava2008}:

\begin{equation*} \label{badarray}
\begin{array}{  | c c  c| c c c|}
\hline
 0 & 1 & 1 & 0 & 1 & 1 \\
 1 & 1& 1& 1 & 1& 1 \\ \hline
 1 &0 & 1 & 1 &0 & 1 \\
 1 & 1& 1 & 1 & 1& 1  \\ \hline
 1 & 1 & 0&  1 & 1 & 1\\
 1 & 1& 1 &1 & 1 & 0 \\ \hline
 \end{array}
 \end{equation*}
 
Having considered some restrictions on which \textsc{3-SAT} instances can be embedded into \textsc{(2,3)-PossLoc}, we will now present a constructions will be given here that enables a reasonably large class of \textsc{3-SAT} instances to be embedded. We will consider the case in which our projective measurements act on a qubit, and are additionally confined to a single plane through the Bloch sphere. It is possible that relaxing either of these assumptions could allow more instances to be embedded, however this is not clear: moving from a two-dimensional to a higher-dimensional quantum system could allow more flexibility with regards to the geometry of the projectors, but since our measurements can only have two outcomes, having more dimensions causes the creation of impossible events to be more difficult; likewise moving from a system in which all the projectors are coplanar makes it harder for impossible events to be created from the projectors since for a qubit at least, we need three coplanar vectors to form a valid POVM. We note that while the array in equation \ref{badarray} is clearly not possible in this situation, since if projectors $\{P_1,P_2,P_3\}$ in a two-dimensional Hilbert space have the identity projector in their convex hull, then the set $\{P_1,P_2, P_3^\perp\}$ does not unless $P_1=P_2^\perp$, which would form a pair not a triple, this array is not possible to create in any quantum dimension and any realisation of measurement outcomes.

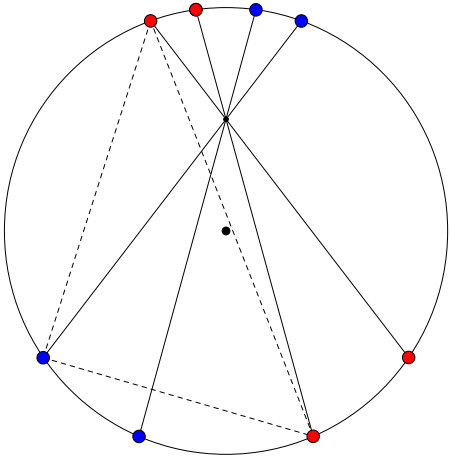
\begin{figure}[h]
\centering
\resizebox{0.45\textwidth}{0.45\textwidth}{
\begin{tikzpicture}[line cap=round,line join=round,>=triangle 45,x=1.25cm,y=1.25cm]
\clip(-5.,-5.) rectangle (5.,5.);
\draw(0.,0.) circle (5.cm);
\draw (-3.3,-2.27)-- (1.36,3.76);
\draw (0.54,3.96)-- (-1.57,-3.68);
\draw (1.5730879602280303,-3.6776887129535054)-- (-0.5438388074049572,3.962857473031342);
\draw (-1.3609931181462014,3.7613425438742856)-- (3.296294172570832,-2.2659313157895093);
\draw [dash pattern=on 3pt off 3pt] (-1.3609931181462014,3.7613425438742856)-- (-3.3,-2.27);
\draw [dash pattern=on 3pt off 3pt] (-3.3,-2.27)-- (1.5730879602280303,-3.6776887129535054);
\draw [dash pattern=on 3pt off 3pt] (1.5730879602280303,-3.6776887129535054)-- (-1.3609931181462014,3.7613425438742856);
\begin{scriptsize}
\draw [fill=black] (0.,0.) circle (2.5pt);
\draw [fill=black] (0.,2.) circle (1.5pt);
\draw [fill=ffqqqq] (3.296294172570832,-2.2659313157895093) circle (4pt);
\draw [fill=ffqqqq] (-1.3609931181462014,3.7613425438742856) circle (4pt);
\draw [fill=ffqqqq] (-0.5438388074049572,3.962857473031342) circle (4pt);
\draw [fill=ffqqqq] (1.5730879602280303,-3.6776887129535054) circle (4pt);
\draw [fill=ffqqqq] (-0.5438388074049572,3.962857473031342) circle (4pt);
\draw [fill=qqqqff] (-3.3,-2.27) circle (4pt);
\draw [fill=qqqqff] (-1.57,-3.68) circle (4pt);
\draw [fill=qqqqff] (0.54,3.96) circle (4pt);
\draw [fill=qqqqff] (1.36,3.76) circle (4pt);
\end{scriptsize}
\end{tikzpicture}}
\caption{A red positive literal, a red negative literal and a blue negative literal form a valid POVM.}
\label{Basicdiagram}
\end{figure}

We will draw inspiration for our projector geometry from the generalised Hardy paradox given above and shown in figure \ref{steer}. Any triple-outcome measurement for Bob will be enacted as a triple of projectors which include the identity operator in their convex hull. The two-outcome measurement is chosen such that the projector for the measurement subrow with the desired 0 steers the state into the one orthogonal to the projector assigned to relevant measurement subcolumn. Explicitly, our embedding has to conform to the following geometric restrictions:

\begin{itemize}
\item The reduced state $\rho_B$ is a point inside or on the edge of a circle.

\item A measurement row corresponds to a line going through the point corresponding to $\rho_B$. The two measurement outcomes are represented by the points at which this line intersects the circle, or, equivalently, the inversion of these points through the centre of the circle. The latter denotation will now be used.

\item A measurement column consists of three points around the edge of the circle such that the circle's centre is in their convex hull, each of which is associated with a single measurement subcolumn.

\item An impossible event happens when the point corresponding to that measurement subcolumn and the point corresponding to that measurment subrow are the same point.
\end{itemize}

A specific quantum scenario will now be explored alongside a characterisation of some of the problem instances that can be embedded into it. Given a \textsc{3-SAT} instance, let us assign each variable a colour: blue or red. As can be seen in figure \ref{Basicdiagram}, a clause consisting of a red positive literal, a red negative literal, and a blue negative literal,\emph{eg} $(\color{red}{l_1} \color{black}{\wedge} \color{red}{\neg l_2} \color{black}{\wedge} \color{blue}{\neg l_3})$, forms a valid POVM.

If we add projectors for $x$ and $y$ that are close to the eigendecomposition of $\rho_B$ (distinguished only because we want them to have independent possibilities), as shown in figure \ref{xydiag}, we can note that we can also support clauses of forms $(\color{red}{l_1} \color{black}{\vee} \color{blue}{ l_2} \color{black}{\vee} \neg x)$ or $(\color{red}{\neg l_1} \color{black}{\vee} \color{blue}{\neg l_2} \color{black}{\vee} y)$.

\begin{figure}[h]
\centering
\resizebox{0.45\textwidth}{0.45\textwidth}{
\begin{tikzpicture}[line cap=round,line join=round,>=triangle 45,x=1.0cm,y=1.0cm]
\clip(-5.,-5.) rectangle (5.,5.);
\draw(0.,0.) circle (4.cm);
\draw (-3.3,-2.27)-- (1.36,3.76);
\draw (0.54,3.96)-- (-1.57,-3.68);
\draw (1.5730879602280303,-3.6776887129535054)-- (-0.5438388074049572,3.962857473031342);
\draw (-1.3609931181462014,3.7613425438742856)-- (3.296294172570832,-2.2659313157895093);
\draw (0.,4.)-- (0.,-4.);
\draw (0.003084062551349804,4.35) node[anchor=north west] {$x$};
\draw (-0.4,4.35) node[anchor=north west] {$y$};
\begin{scriptsize}
\draw [fill=black] (0.,0.) circle (2.5pt);
\draw [fill=black] (0.,2.) circle (1.5pt);
\draw [fill=ffqqqq] (3.296294172570832,-2.2659313157895093) circle (1.5pt);
\draw [fill=ffqqqq] (-1.3609931181462014,3.7613425438742856) circle (1.5pt);
\draw [fill=ffqqqq] (-0.5438388074049572,3.962857473031342) circle (1.5pt);
\draw [fill=ffqqqq] (1.5730879602280303,-3.6776887129535054) circle (1.5pt);
\draw [fill=ffqqqq] (-0.5438388074049572,3.962857473031342) circle (1.5pt);
\draw [fill=qqqqff] (-3.3,-2.27) circle (1.5pt);
\draw [fill=qqqqff] (-1.57,-3.68) circle (1.5pt);
\draw [fill=qqqqff] (0.54,3.96) circle (1.5pt);
\draw [fill=qqqqff] (1.36,3.76) circle (1.5pt);
\draw [fill=uuuuuu] (0.,-4.) circle (1.5pt);
\draw [fill=uuuuuu] (0.,4.) circle (1.5pt);
\end{scriptsize}
\end{tikzpicture}}
\caption{Adding in $x$ and $y$ projectors close to the eigendecomposition of $\rho_B$.}
\label{xydiag}
\end{figure}
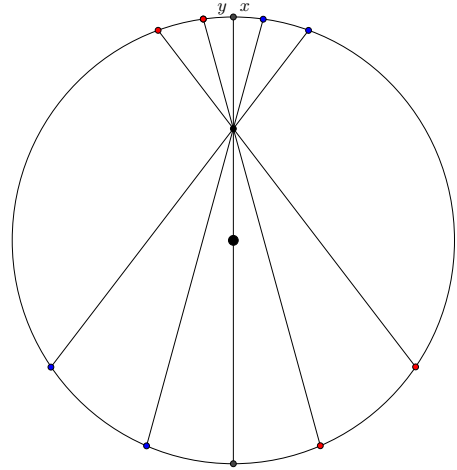

If we want to perform the transformation procedure in theorem \ref{rob3sat}, we can see from these constructions that the clauses in the initial \textsc{3-SAT} instance can have one of the following forms:

\begin{itemize}
\item $(\color{red}{\neg l_1} \color{black}{\vee} \color{blue}{\neg l_2} \color{black}{\vee} \color{red}{l_3} \color{black})$, since this becomes modified to 
$(\color{red}{\neg l_1} \color{black}{\vee} \color{blue}{\neg l_2} \color{black}{\vee} z)\wedge(\neg z \vee \color{blue}{ l_3}\color{black}{\vee} y)$, and by choosing the coloration $\color{red}{z}$ both the clauses have permitted forms.

\item $(\color{red}{ l_1} \color{black}{\vee} \color{blue}{l_2} \color{black}{\vee} \color{red}{\neg l_3}\color{black} )$, since this becomes modified to 
$(\color{red}{l_1} \color{black}{\vee} \color{blue}{ l_2} \color{black}{\vee} \neg z)\wedge(z \vee \color{blue}{ l_3}\color{black}{\vee} \neg x)$, and by choosing the coloration $\color{red}{z}$ both the clauses have permitted forms.

\item $(\color{red}{ l_1} \color{black}{\vee} \color{red}{l_2} \color{black}{\vee} \color{blue}{l_3}\color{black} )$, by choosing $z$ not with a colour but, like $x$ and $y$, close to the eigendecomposition of $\rho_B$.

\item $(\color{red}{\neg l_1} \color{black}{\vee} \color{red}{\neg l_2} \color{black}{\vee} \color{blue}{\neg l_3}\color{black} )$, by choosing $z$ not with a colour but, like $x$ and $y$, close to the eigendecomposition of $\rho_B$.
\end{itemize}
We also clearly by symmetry have as permissible clauses the images of the ones above under an exchange of the colours blue and red. The rules seem to be, then, that in the original \textsc{3-SAT} each variable must be colourable either red or blue such that each clause contains at least one literal of each colour, and that the literal in each clause that is the only one of its colour must not have opposite sense to the other two literals. The author has been unable to produce a proof of \textbf{NP}-hardness or membership in \textbf{P} of such \textsc{3-SAT} instances. In any case, the quantum realisation forces a very strong geometrical relationship on clauses of the embedded instance.
 
 \section{Conclusion}
 
 By demonstrating the \textbf{NP}-completeness of $(2,3)$-\textsc{PossLoc}, the computational complexity of all such possibilistic locality experiments have now been characterised. However, the problem of quantum realisation remains open; we have also seen that it is difficult even to rule out a quantum realisation under the restrictive assumption that the party with three-outcome measurements has access only to a two-dimensional quantum system-- although this does also imply without loss of generality that the entire state under questioning is an entangled state of two qubits as can be seen by invoking the Schmidt decomposition. A natural extension of this problem would be into the formalism of ontological models; it is possible to prove possibilistic nonlocality results at the level of underlying ontological models that nonetheless rely on operational probabilities rather than possibilities, as is the case in section 3 of reference  \cite{Jevt2015}.

\section*{Acknowledgements}

I would like to thank Terry Rudolph for multiple helpful discussions, and acknowledge support from Cambridge Quantum Computing Limited and from EPSRC \emph{via} the Centre for Doctoral Training in Controlled Quantum Dynamics..

\bibliography{/Users/andrewsimmons/Documents/Latex/Bib/bibliography}{}
\bibliographystyle{plain}

\end{document}